\documentclass[journal,10pt,twocolumn]{IEEEtran}

\usepackage{amsmath,amssymb,amsfonts}
\usepackage{mathtools}
\usepackage{bm}
\usepackage{enumitem}
\usepackage{hyperref}
\usepackage[capitalise,noabbrev]{cleveref}
\usepackage{tikz}
\usetikzlibrary{arrows.meta,positioning,calc}
\usepackage{pgfplots}
\pgfplotsset{compat=1.18}
\usepackage{booktabs}
\usepackage{bbm}
\usepackage{amsthm}
\usepgfplotslibrary{fillbetween}
\usepgfplotslibrary{groupplots}
\usepackage{array}
\usepackage{xcolor}

\newtheorem{theorem}{Theorem}
\newtheorem{lemma}{Lemma}
\newtheorem{corollary}{Corollary}
\newtheorem{proposition}{Proposition}
\newtheorem{remark}{Remark}
\newtheorem{definition}{Definition}
\newtheorem{assumption}{Assumption}

\newcommand{\calA}{\mathcal{A}}
\newcommand{\calK}{\mathcal{K}}
\newcommand{\E}{\mathbb{E}}

\DeclareMathOperator*{\argmax}{arg\,max}
\renewcommand{\Pr}{\mathbb{P}}
\newcommand{\Kmin}{K_{\min}}

\begin{document}

\title{AoI-Aware Polar Code Blind Rate Adaptation}

\author{
  \IEEEauthorblockN{Hassan Tavakoli, Member, IEEE,\\}
  \IEEEauthorblockA{School of EECS,
                    Oregon State University\\
                    Corvallis, OR 97331, USA\\
                    tavakolh@oregonstate.edu}
}

\maketitle

\begin{abstract}
We study adaptive prefix-length selection for reliability-ordered
coded transmission over $N$ coordinates, using only the $K$ most reliable, to balance age of information (AoI) against
throughput. An additive-increase/decrease (AID) prefix-walk controller updates the prefix length
from the last decoding outcome alone, at $O(1)$ per-slot cost with no
channel estimation, inducing an irreducible birth--death chain whose
mean drift fixes a unique balance point $K^\dagger$ (largest $K$ with success probability $\ge\tfrac12$), around which the
stationary distribution concentrates geometrically. We derive this
distribution in closed form via detailed balance and obtain exact
steady-state AoI and goodput via renewal-reward analysis.
Simulations on a $64$-coordinate polar code over a binary symmetric channel (BSC)
confirm these predictions to within $0.3\%$. Relative to a genie-aided
upper bound with perfect channel knowledge, we further quantify the
goodput cost of adapting from a single feedback bit alone.
\end{abstract}

\begin{IEEEkeywords}
Age of information, polar codes, AID, birth--death chains, regenerative processes.
\end{IEEEkeywords}

\section{Introduction}
\label{sec:intro}

Perfect channel knowledge requires a validated model or
repeated measuring when the channel changes. Reliability-ordered
polar prefixes make one-bit adaptation analyzable: nested
information sets reduce online rate
adaptation to a one-bit update on $K$, inducing a finite-state
birth--death chain with closed-form stationary distribution, balance
point, steady-state AoI, and goodput.
The update rule itself is structurally related to
classical additive-increase/decrease control (e.g., TCP); what is new
is the closed-form analysis it admits once coupled to nested,
reliability-ordered prefixes, so our results hold at every finite $N$.

Under a monotone success-probability condition, the proposed AID
prefix update induces an irreducible finite-state Markov chain with a
closed-form stationary distribution. This characterization yields
explicit steady-state expressions for goodput and AoI, together with
a geometric concentration result around the operating point where the
success probability is approximately one half.
Let $\calK\triangleq\{\Kmin,\ldots,N\}$ be the allowed prefix lengths, $\Kmin$ the smallest (we write $\{a{:}b\}$ for $\{a,\ldots,b\}$), and $\theta_K$ the success probability at prefix length $K$.
We benchmark AID against the goodput-optimal static prefix
$K^\star \triangleq \argmax_{K\in\calK} (K/N)\theta_K$, a genie-aided
upper bound requiring full offline knowledge of $\{\theta_K\}$; it
quantifies the cost of removing that requirement, not a target AID is
designed to match.
This connects to three established lines of work. AoI, introduced in \cite{kaul2012real} and surveyed in \cite{yates2021age}, has adaptive-scheduling variants that cast update policies as Markov decision processes (MDPs) over the age process with a fixed code \cite{sun2017update,baknina2018coded}. Polar-code rate selection uses the Bhattacharyya ordering for fixed-rate at a target error \cite{arikan2009channel,tal2013how,trifonov2012efficient}, without an online controller.
Recent work has also explored alternative polarization mechanisms and
finite-length polar-code designs with additional structural or privacy
guarantees \cite{MACPolar,RankGuardPolar,GoodIndexRelay}.

\section{System Model}
\label{sec:model}

Time is slotted. A fixed reliability ordering $\pi$ is computed offline; for polar codes, $\pi$ sorts synthetic channels by increasing Bhattacharyya parameter \cite{arikan2009channel,trifonov2012efficient}, and $N=2^n$ for integer $n\ge1$; $n$ denotes the number of polarization stages in the Ar{\i}kan kernel construction~\cite{arikan2009channel}, and $N$ is the resulting maximum blocklength.
At each $t$, the transmitter selects an integer prefix length $K_t\in\calK$ and transmits with information set $\calA_{K_t}\triangleq\{\pi(1),\ldots,\pi(K_t)\}$ at rate $R(K_t)\triangleq K_t/N$, observing the decoding outcome $S_t\in\{0,1\}$ ($1$: success).
Any code family exposes a per-coordinate reliability score that could in principle support prefix-length control. Polar codes combine (i)~a deterministic, offline reliability ordering, (ii)~nested information sets $\calA_{K_1}\subseteq\calA_{K_2}$ for $K_1\le K_2$, and (iii)~a resulting one-dimensional rate control variable $K$. Property~(ii) is what guarantees the update $K_{t+1}=f(K_t,S_t)$ in Definition~\ref{def:rule} always corresponds to a well-defined coordinate set.
Polar codes suit prefix-length control since their synthetic channels admit an offline ordering by Bhattacharyya parameter. Let $Z_{\pi(1)} \le \cdots \le Z_{\pi(N)}$ be the sorted reliabilities; the code rate is then controlled by the single integer $K$, which a one-bit AID rule exploits to adjust rate online without channel estimation.

\begin{figure}[t]
\centering
\resizebox{0.9\linewidth}{!}{%
\begin{tikzpicture}[
  >=Stealth,
  node distance=1.4cm,
  state/.style={circle,draw,thick,minimum size=5mm,inner sep=0pt,font=\small},
  success/.style={->,thick},
  failure/.style={->,thick,dashed}
]
  \node[state] (k1) {\scriptsize $K{=}1$};
  \node[state,right of=k1] (k2) {\scriptsize $K{=}2$};
  \node[state,right of=k2] (k3) {\scriptsize $K{=}3$};
  \node[state,right of=k3] (k4) {\scriptsize $K{=}4$};

  \draw[success] (k1) -- node[above,font=\scriptsize] {$S_t=1$} (k2);
  \draw[success] (k2) -- node[above,font=\scriptsize] {$S_t=1$} (k3);
  \draw[success] (k3) -- node[above,font=\scriptsize] {$S_t=1$} (k4);

  \draw[failure] (k2) to[bend left=35] node[below,font=\scriptsize] {$S_t=0$} (k1);
  \draw[failure] (k3) to[bend left=35] node[below,font=\scriptsize] {$S_t=0$} (k2);
  \draw[failure] (k4) to[bend left=35] node[below,font=\scriptsize] {$S_t=0$} (k3);

\draw[failure] (k1) edge[loop left,looseness=8] node[above left,font=\scriptsize] {$S_t=0$}  (k1);
\draw[success] (k4) edge[loop right,looseness=8] node[above right,font=\scriptsize] {$S_t=1$}  (k4);
\end{tikzpicture}}
\caption{A walk over $N{=}4$ ($S_t{=}1/0$: success/failure); rate: $R(K)=K/4$.}
\label{fig:n4_example}
\end{figure}

\paragraph*{Example: Adaptive Polar Prefix Length for $N=4$}
For $N=4=2^2$, the polar transform is generated by the Arikan kernel $F=\begin{bmatrix}1&0\\1&1\end{bmatrix}$ via $G_N=F^{\otimes n}$,
\(
  G_4 = F^{\otimes 2},
\)
where $\otimes$ denotes the Kronecker product.
The codeword is $x=u G_4$. For this example the reliability ordering map $\pi:\{1,2,3,4\}\to\{1,3,2,4\}$ is
\(
Z_1 \le Z_3 \le Z_2 \le Z_4.
\)
The $K$ rows of $G_4$ retained are exactly the rows indexed by the prefix $\calA_K$ of this ordering; the remaining $N-K$ rows correspond to frozen inputs set to zero. Because $\calA_K\subset\calA_{K+1}$ for every $K$ (\Cref{prop:nested}), this prefix is well-defined independent of how $K$ is reached.
The AID update \eqref{eq:AID} for \(N=4\) and \(K_{min}=1\).
Starting from $K_0=1$, the feedback sequence $S=\{1,1,0,1\}$, gives the trajectory $1\,(R{=}1/4)\to2\,(R{=}1/2)\to3\,(R{=}3/4)\to2\,(R{=}1/2)\to3\,(R{=}3/4)$ (\cref{fig:n4_example}).

\subsection{AoI Dynamics and Success Probabilities}
Let $\Delta_t\in\{0,1,2,\ldots\}$ denote the age at the start of slot $t$, where $S_t$ is fed back with negligible delay before the next transmission. The channel itself is memoryless: conditioned on $K_t$, $S_t\sim\mathrm{Bernoulli}(\theta_{K_t})$ 
independently of the past. The unconditional sequence $\{S_t\}$ is nonetheless generally correlated across $t$, since $K_t$ is itself updated from prior outcomes via \eqref{eq:AID}; this dependence resides entirely in the controller's state $K_t$, not the channel.
Equation~\eqref{eq:aoi_dynamics} assumes a fixed-duration slot
regardless of $K$, not modeling transmission-time cost at low rate. The recursion is affine: failure
increments age by one, success resets to zero -- the standard
discrete-time AoI convention~\cite{kaul2012real,yates2021age}. Then
\begin{equation}
  \label{eq:aoi_dynamics}
  \Delta_{t+1} =
  \begin{cases}
    0,            & S_t = 1,\\
    \Delta_t + 1, & S_t = 0.
  \end{cases}
\end{equation}

For each $K\in\calK$, let $\theta_K\triangleq\Pr(S_t=1\mid K_t=K)$, so $S_t\sim\mathrm{Bernoulli}(\theta_K)$ given $K_t=K$. The $\{\theta_K\}$ are fixed by the channel but unknown to the controller.

\begin{assumption}[Monotone success probabilities]
\label{ass:monotone}
\(
  1 > \theta_{\Kmin} \ge \theta_{\Kmin+1} \ge \cdots \ge \theta_N > 0.
\)
\end{assumption}

\begin{remark}
\label{rem:assumption_scope}
Assumption~\ref{ass:monotone} is the \emph{only} substantive
assumption we make. Irreducibility, positive recurrence, reversibility,
and the closed-form stationary distribution
(Theorems~\ref{thm:irreducible}--\ref{thm:stationary}) are
\textbf{derived consequences}, not separately assumed. The physical
channel need only produce binary success/failure outcomes. The success
profile $\{\theta_K\}$ is fixed but unknown to the transmitter; AID
locates an operating point within it using one feedback bit per slot,
without estimation. Extending to time-varying $\theta_K(t)$ is future
work (\Cref{sec:conclusion}).
\end{remark}

This holds under the finite-blocklength (FBL) approximation \cite{polyanskiy2010channel}: for rate $R(K)=K/N$,
\begin{equation}
  \label{eq:fbl}
  \epsilon=\epsilon(N,R) \approx Q\!\left(\frac{C - R + \frac{\log_2 N}{2N}}{\sqrt{V/N}}\right),
\end{equation}
with $C,V$ the capacity and dispersion and $Q$ the Gaussian tail function. Defining the FBL surrogate $\tilde\theta_K\triangleq 1-\epsilon(N,K/N)$.

\begin{lemma}
\label{lem:fbl}
$\tilde\theta_K$ is strictly decreasing in $K$, so Assumption~\ref{ass:monotone}
holds with $\tilde\theta_K$ in place of $\theta_K$. The goodput
$\tilde G(K)\triangleq(K/N)\tilde\theta_K$ is positive on $\calK$ and attains a
maximizer $K^\star\in\calK$.
\end{lemma}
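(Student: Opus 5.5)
We will prove the three claims in order: monotonicity, then positivity, then existence of a maximizer.

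\textbf{Monotonicity.} Write $\tilde\theta_K = 1-Q(x_K)$ with
\[
x_K \triangleq \frac{C - K/N + \frac{\log_2 N}{2N}}{\sqrt{V/N}}.
\]
We assume the channel has positive dispersion, $V>0$; for the BSC this holds whenever the crossover probability is not in $\{0,\tfrac12,1\}$. Under this assumption, $x_K$ is an affine function of $K$ with slope $-1/(N\sqrt{V/N})=-1/\sqrt{NV}<0$, so it is strictly decreasing in $K$. The Gaussian tail $Q(x)=\int_x^\infty \phi(u)\,du$ has derivative $-\phi(x)<0$ everywhere, so $Q$ is strictly decreasing on $\mathbb{R}$. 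Composing the two, $Q(x_K)$ is strictly increasing in $K$, and therefore $\tilde\theta_K$ is strictly decreasing. This gives the chain of inequalities in Assumption~\ref{ass:monotone} with $\tilde\theta_K$ in place of $\theta_K$, and in fact with strict inequalities.

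\textbf{Strict bounds.} Assumption~\ref{ass:monotone} also requires the endpoint bounds $1>\tilde\theta_{\Kmin}$ and $\tilde\theta_N>0$. Because $\phi>0$ on all of $\mathbb{R}$, we have $0<Q(x)<1$ for every finite $x$. Since $x_K$ is finite for every $K$, this gives $0<\tilde\theta_K<1$ for all $K\in\calK$. This is the step needing care: it rests on treating \eqref{eq:fbl} as an exact definition of the surrogate, so that $\epsilon$ is never clipped to $0$ or $1$, and on $V>0$. I will state both explicitly as standing conventions of the surrogate.

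\textbf{Positivity and existence of the maximizer.} Let $\tilde G(K)=(K/N)\tilde\theta_K$. By definition of $\calK$ we have $K\ge\Kmin\ge1$, so $K/N>0$. Combined with $\tilde\theta_K>0$ from the previous step, this gives $\tilde G(K)>0$ on $\calK$. Finally, $\calK=\{\Kmin{:}N\}$ is a finite nonempty set, so $\tilde G$ attains its maximum on it, and $K^\star=\argmax_{K\in\calK}\tilde G(K)$ exists. If several $K$ attain the maximum, we choose the smallest, which makes $K^\star$ well defined. No continuity or concavity argument is needed, since the maximization is over a finite set.
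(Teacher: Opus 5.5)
Your proposal is correct and matches the paper's proof step for step: the argument of $Q$ is strictly decreasing in $K$ and $Q$ is strictly decreasing, so $\tilde\theta_K$ is strictly decreasing; $Q(x)\in(0,1)$ gives positivity; and finiteness of $\calK$ gives the maximizer. Your additions---the hypothesis $V>0$, which the paper uses tacitly when dividing by $\sqrt{V/N}$, and the explicit check of the endpoint bounds $0<\tilde\theta_K<1$ required by Assumption~\ref{ass:monotone}---fill in details the paper leaves implicit rather than changing the route.
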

\begin{proof}
The argument of $Q(\cdot)$ in \eqref{eq:fbl} is strictly decreasing in
$K$ and $Q$ is strictly decreasing, so $\epsilon$ is strictly increasing and
$\tilde\theta_K=1-\epsilon$ is strictly decreasing. Since $Q(x)\in(0,1)$,
$\tilde G(K)>0$ on the finite set $\calK$, so a maximizer exists.
\end{proof}

\subsection{AID Prefix-Walk Controller and Fixed Baseline}
\label{sec:AID}

\begin{definition}
\label{def:rule}
Fix $\Kmin\ge1$ and $N$. At each slot $t$: transmit with $\calA_{K_t}$, observe $S_t$, and update; AoI evolves by \eqref{eq:aoi_dynamics}.
\begin{equation}
  \label{eq:AID}
  K_{t+1} =
  \begin{cases}
    \min(K_t + 1,\, N),     & S_t = 1,\\
    \max(K_t - 1,\, \Kmin), & S_t = 0.
  \end{cases}
\end{equation}
\end{definition}
The $\pm1$ step is the parameter-free choice: for steps $+a/-b$ the zero-drift point is $\theta_K=b/(a+b)$, so $a=b$ targets $\tfrac12$ without channel knowledge, while $a\neq b$ implicitly selects a target; $a=b=1$ is also the only nearest-neighbor (birth--death) case, giving \eqref{eq:stationary}.
An AoI-dependent step would make $(\Delta_t,K_t)$ the relevant state.
The rule requires one Bernoulli observation and one integer increment/decrement per slot, $O(1)$ cost, and no estimate of $\theta_K$ is consulted. The rule is \emph{channel-blind} by construction. Since $K_{t+1}$ is a deterministic function of $K_t$ and the single random bit $S_t$, the process $\{K_t\}$ is a time-homogeneous Markov chain on $\calK$.

\begin{proposition}
\label{prop:nested}
If $K_{t_1}\le K_{t_2}$ then $\calA_{K_{t_1}}\subseteq\calA_{K_{t_2}}$.
\end{proposition}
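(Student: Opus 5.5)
The plan is to argue directly from the definition $\calA_{K}\triangleq\{\pi(1),\ldots,\pi(K)\}$ given in the System Model. The key observation is that $\pi$ is a single fixed ordering, computed offline and independent of $t$, of the $N$ coordinates. As a result, the information set used at slot $t$ depends on $t$ only through the integer $K_t$. The statement therefore reduces to a purely set-theoretic fact about prefixes of one fixed sequence, and the time indices $t_1,t_2$ play no role beyond selecting the two integers $K_{t_1},K_{t_2}\in\calK$.

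The steps, in order, are as follows.

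\textbf{Step 1.} Set $K_1\triangleq K_{t_1}$ and $K_2\triangleq K_{t_2}$, so that $K_1\le K_2$ by hypothesis. Both values lie in $\calK\subseteq\{1,\ldots,N\}$, because the update \eqref{eq:AID} clips to $[\Kmin,N]$. Hence both prefixes are well defined subsets of the coordinate set.

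\textbf{Step 2.} Take an arbitrary element $x\in\calA_{K_1}$. Then $x=\pi(j)$ for some index $j$ with $1\le j\le K_1$.

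\textbf{Step 3.} Since $j\le K_1\le K_2$, the same index satisfies $1\le j\le K_2$. Therefore $x=\pi(j)\in\calA_{K_2}$, which gives $\calA_{K_1}\subseteq\calA_{K_2}$.

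It is worth remarking that equality holds if and only if $K_1=K_2$. This uses that $\pi$ is a permutation, so the sets $\calA_K$ are strictly nested, with $|\calA_K|=K$.

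There is essentially no obstacle. The only point needing care is to make explicit that $\pi$ does not vary with $t$ or with the trajectory of $\{K_t\}$. This holds because the ordering by Bhattacharyya parameters is fixed offline. It is exactly what allows the statement to be phrased in terms of slot indices while still reducing to the time-free inclusion $\calA_{K}\subseteq\calA_{K'}$ for $K\le K'$. This time-free inclusion is the form invoked in the $N=4$ example.
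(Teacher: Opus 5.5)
Your proof is correct and matches the paper's own argument: the paper notes that $\pi$ is fixed and each $\calA_K$ is a prefix of $\pi$, so prefixes of the same ordering are nested, and your Steps 2--3 spell this out element by element. Your extra remark on strict nesting, which uses that $\pi$ is a permutation, is also correct but is not needed for the statement.
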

\begin{proof}
$\pi$ is fixed, and $\calA_K=\{\pi(1):\pi(K)\}$ is a prefix of $\pi$ for every $K$; prefixes of the same ordering are nested.
\end{proof}

\begin{remark}[Memoryless channel, stateful controller]
\label{rem:memoryless_stateful}
The channel is memoryless given $K_t$; correlation in $\{S_t\}$ comes
only from the controller's own state $K_t$. AID exploits no channel
memory: its value is locating $K^\dagger$ without prior knowledge of
$\{\theta_K\}$, unlike a fixed policy, which needs $\{\theta_K\}$ to
set $K^\star$. Maximizing $\theta_K$ alone favors $K=\Kmin$, but AID
tracks the drift sign $2\theta_K-1$ (Proposition~\ref{prop:drift}),
not $\theta_K$ itself, so $K^\dagger\neq\Kmin$; always using $\Kmin$
would minimize AoI at the cost of near-zero throughput, the corner
case this joint operating point avoids. Table~\ref{tab:mismatch}
confirms this: under mismatch, AID's goodput holds up while the
fixed baseline's degrades, even surpassing it at BSC crossover probability $p=0.16$.
\end{remark}

As a non-adaptive reference, we assume the static rule $K_t\equiv K^\star$ for all $t$, where $K^\star$ is the goodput-maximizing prefix length under perfect knowledge of $\{\theta_K\}$. 
Under this static rule only, $S_t$ is i.i.d.\ Bernoulli$(\theta_{K^\star})$ since the transmitted rate never changes; the steady-state goodput is exactly
\(
  G^\star = \frac{K^\star}{N}\theta_{K^\star},
\)
the steady-state AoI follows from a geometric inter-success time with parameter $\theta_{K^\star}$ as
\(
  \Delta^\star = \frac{1-\theta_{K^\star}}{2\,\theta_{K^\star}}.
\)
This offline oracle isolates the cost of trading exact channel knowledge for a one-bit, estimation-free feedback rule.

\section{Closed-Form Performance with Birth--Death}
\label{sec:descent}

\begin{proposition}[Mean drift]
\label{prop:drift}
Under Assumption~\ref{ass:monotone}, for $K\in\{\Kmin+1:N-1\}$,
\(
  \E[K_{t+1}-K_t\mid K_t=K] = 2\theta_K-1.
\)
\end{proposition}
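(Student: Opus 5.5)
The plan is to compute the conditional increment directly from the update rule \eqref{eq:AID}, using the fact that $K$ lies strictly inside $\calK$, so neither the $\min$ nor the $\max$ clips. Fix $K\in\{\Kmin+1:N-1\}$ and condition on $K_t=K$. Since $K+1\le N$, we have $\min(K+1,N)=K+1$. Since $K-1\ge\Kmin$, we have $\max(K-1,\Kmin)=K-1$. Hence the increment $K_{t+1}-K_t$ equals $+1$ on $\{S_t=1\}$ and $-1$ on $\{S_t=0\}$. In other words, it equals $2S_t-1$ exactly on the event $\{K_t=K\}$.

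Next we use the channel model. Conditioned on $K_t=K$, $S_t\sim\mathrm{Bernoulli}(\theta_K)$, by the definition of $\theta_K$ and the memoryless-channel description preceding Assumption~\ref{ass:monotone}. Taking the conditional expectation therefore gives
\[
\E[K_{t+1}-K_t\mid K_t=K]=\E[2S_t-1\mid K_t=K]=2\theta_K-1 .
\]
The only point to address is that conditioning on $K_t$ alone suffices even though $\{S_t\}$ is correlated across time. This holds because $K_{t+1}$ is a deterministic function of $(K_t,S_t)$, and the conditional law of $S_t$ given the entire past depends only on $K_t$, as stated in Remark~\ref{rem:memoryless_stateful}. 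So there is no real obstacle.

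Assumption~\ref{ass:monotone} is not needed for the identity itself. It only guarantees that $\theta_K\in(0,1)$, so the drift lies strictly in $(-1,1)$ and both moves have positive probability. Monotonicity further implies that $2\theta_K-1$ is nonincreasing in $K$, which is what later pins down $K^\dagger$.

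For completeness, I would remark that at the boundary states the formula changes because of the clipping. At $K=N$ the drift is $-(1-\theta_N)$, and at $K=\Kmin$ it is $\theta_{\Kmin}$. This explains why the statement restricts $K$ to the interior.
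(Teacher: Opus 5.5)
Your proof is correct and matches the paper's argument: in the interior the clipping in \eqref{eq:AID} is inactive, so the increment is $+1$ with probability $\theta_K$ and $-1$ with probability $1-\theta_K$, giving $2\theta_K-1$. Your side remarks are also correct: Assumption~\ref{ass:monotone} is not needed for the identity itself, and the boundary drifts are $\theta_{\Kmin}$ and $-(1-\theta_N)$.
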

\begin{proof}
By \eqref{eq:AID}, $K_{t+1}-K_t=+1$ with probability $\theta_K$ and $-1$ with probability $1-\theta_K$; taking the expectation gives $(+1)\theta_K+(-1)(1-\theta_K)=2\theta_K-1$.
\end{proof}

We also define the balance point as 
$K^\dagger \triangleq \max\{K\in\calK : \theta_K \ge \tfrac{1}{2}\}$.
By Proposition~\ref{prop:drift}, drift is positive for $K<K^\dagger$, negative for
$K>K^\dagger$, making $K^\dagger$ the unique fixed point -- and
explaining why AID need not locate $K^\star$: the drift depends only on
whether $\theta_K\ge\tfrac12$, not on the goodput value $(K/N)\theta_K$.

\begin{lemma}
\label{lem:plays_infinite}
Under Definition~\ref{def:rule} and Assumption~\ref{ass:monotone}, every $K\in\calK$ is visited infinitely often almost surely.
\end{lemma}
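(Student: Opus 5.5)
The plan is to show that $\{K_t\}$ is an irreducible Markov chain on the finite set $\calK$. The conclusion then follows from the standard fact that a finite irreducible chain is positive recurrent, so every state is visited infinitely often almost surely from any initial state. We already know from Definition~\ref{def:rule} that $\{K_t\}$ is a time-homogeneous Markov chain on $\calK$, so irreducibility is the only thing left to verify.

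\textbf{Transition probabilities.} From \eqref{eq:AID}:
\begin{itemize}[leftmargin=*]
\item for $K<N$, the chain moves $K\to K+1$ with probability $\theta_K$;
\item for $K>\Kmin$, it moves $K\to K-1$ with probability $1-\theta_K$;
\item the boundary self-loops at $\Kmin$ (on failure) and at $N$ (on success) do not matter for this argument.
\end{itemize}

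Assumption~\ref{ass:monotone} gives $0<\theta_N\le\theta_K\le\theta_{\Kmin}<1$ for every $K\in\calK$. Hence every nearest-neighbour transition has probability bounded below by
\[
\delta\triangleq\min(\theta_N,\,1-\theta_{\Kmin})>0 .
\]
So any $j$ can reach any $k$ in $|j-k|$ steps along the path of unit moves, with probability at least $\delta^{|j-k|}>0$. This establishes irreducibility on the finite state space.

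\textbf{Recurrence, self-contained route.} To avoid quoting the finite-chain theorem, I would instead use a uniform hitting bound. Set $L\triangleq N-\Kmin$. From any state, the target $k$ is reached within $L$ slots with probability at least $\delta^{L}$. By the Markov property applied over successive blocks of $L$ slots, the probability of not hitting $k$ within $mL$ slots is at most $(1-\delta^L)^m\to0$. Thus $k$ is hit almost surely from any starting state.

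Applying this again after each visit, via the strong Markov property at successive hitting times, shows that the number of visits to $k$ is infinite almost surely. A union over the finitely many $k\in\calK$ preserves the almost-sure statement.

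\textbf{Main obstacle.} The only substantive point is using the strict inequalities in Assumption~\ref{ass:monotone}, namely $\theta_{\Kmin}<1$ and $\theta_N>0$, to keep both the upward and downward moves strictly positive at every state. Monotonicity itself is not needed here; only these boundary bounds are. Without them the chain could be absorbed at an end, for example at $N$ if $\theta_N=1$. Everything else is routine.
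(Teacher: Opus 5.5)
Your proof is correct and takes the same route as the paper. Positive nearest-neighbour transition probabilities give irreducibility on the finite set $\calK$; the paper then cites the finite-chain recurrence theorem from Norris, while your $(1-\delta^{L})^m$ block bound plus the strong Markov property proves that step directly. One small correction to your closing remark: the endpoint bounds $\theta_{\Kmin}<1$ and $\theta_N>0$ alone are not enough (e.g., $\theta_K=1$ at an interior $K$ would make every state below $K$ transient), and what you need is $\theta_K\in(0,1)$ at every state, which Assumption~\ref{ass:monotone} supplies only through monotonicity, exactly as your own chain $0<\theta_N\le\theta_K\le\theta_{\Kmin}<1$ uses.
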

\begin{proof}
$\{K_t\}$ is a random walk on the graph $\{\Kmin:N\}$ with strictly positive up/down transition probabilities at every interior state ($\theta_K\in(0,1)$) and inward drift at the boundaries. A finite irreducible chain is positive recurrent \cite{norris1998markov}.
\end{proof}

\begin{theorem}
\label{thm:irreducible}
Under Assumption~\ref{ass:monotone}, $\{K_t\}$ is irreducible and positive recurrent on $\calK$, with a stationary distribution $\mu$.
\end{theorem}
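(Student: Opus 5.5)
We first write down the transition kernel of $\{K_t\}$ explicitly from \eqref{eq:AID}. For interior states $K\in\{\Kmin+1:N-1\}$ we have $P(K,K+1)=\theta_K$ and $P(K,K-1)=1-\theta_K$. At the lower boundary, $P(\Kmin,\Kmin+1)=\theta_{\Kmin}$ and $P(\Kmin,\Kmin)=1-\theta_{\Kmin}$. At the upper boundary, $P(N,N-1)=1-\theta_N$ and $P(N,N)=\theta_N$. The degenerate case $\Kmin=N$ is a single state and is trivially irreducible and positive recurrent with $\mu=\delta_N$, so we assume $\Kmin<N$. The chain is time-homogeneous Markov because $K_{t+1}$ is a deterministic function of $K_t$ and the bit $S_t$, and $S_t\sim\mathrm{Bernoulli}(\theta_{K_t})$ is conditionally independent of the past given $K_t$.

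Irreducibility follows directly from Assumption~\ref{ass:monotone}. It gives $0<\theta_N\le\theta_K\le\theta_{\Kmin}<1$ for every $K\in\calK$, so every up-step probability $\theta_K$ ($K<N$) and every down-step probability $1-\theta_K$ ($K>\Kmin$) is strictly positive. Take any $i<j$ in $\calK$. The path $i\to i+1\to\cdots\to j$ has probability $\prod_{K=i}^{j-1}\theta_K>0$, and the reverse path $j\to j-1\to\cdots\to i$ has probability $\prod_{K=i+1}^{j}(1-\theta_K)>0$. Hence all states communicate and the chain is irreducible. As a side remark, both $\theta_{\Kmin}<1$ and $\theta_N>0$ are needed only at the boundary self-loops. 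Self-loops exist at $\Kmin$ and $N$, so the chain is even aperiodic, though the theorem does not require this.

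Positive recurrence then uses the standard fact that an irreducible Markov chain on a finite state space is positive recurrent and has a unique stationary distribution \cite{norris1998markov}. This is the same fact invoked in Lemma~\ref{lem:plays_infinite}, which already guarantees that every state is visited infinitely often, i.e., recurrence. Finiteness upgrades this to positive recurrence: the mean return times are finite. For a self-contained argument we would show this directly. Set $m=N-\Kmin+1$ and $\delta=\min\bigl(\prod_K\theta_K,\prod_K(1-\theta_K)\bigr)>0$. From any state, any target is reached within $m$ steps with probability at least $\delta$. The hitting time is therefore dominated by $m$ times a geometric variable and has finite mean. The stationary distribution is then $\mu(K)=1/\E_K[\tau_K^+]$.

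The argument has no real obstacle. The only care needed is at the boundaries, where the clipping in \eqref{eq:AID} produces self-loops rather than exits, and in checking that the strict inequalities in Assumption~\ref{ass:monotone} are exactly what keep every needed transition probability positive. The closed form of $\mu$ is left to the subsequent detailed-balance result (Theorem~\ref{thm:stationary}); here we assert only existence and uniqueness.
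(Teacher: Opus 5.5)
Your proof is correct and follows the same route as the paper: strictly positive up/down transition probabilities from Assumption~\ref{ass:monotone} give irreducibility, and irreducibility on a finite state space gives positive recurrence and a unique stationary distribution. Your explicit path products and geometric hitting-time bound just fill in what the paper delegates to the standard finite-chain result. One minor point that does not affect the argument: your side remark that $\theta_{\Kmin}<1$ and $\theta_N>0$ matter only for the boundary self-loops is misleading. Under the stated assumption, these two endpoint inequalities, propagated by monotonicity, are exactly what give $0<\theta_K<1$ for every interior $K$ as well, as your own preceding sentence uses.
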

\begin{proof}
Strictly positive transition probabilities throughout $\calK$ give irreducibility on a finite set, which implies positive recurrence and uniqueness of $\mu$ \cite{norris1998markov}.
\end{proof}

\begin{theorem}
\label{thm:stationary}
The unique stationary distribution of $\{K_t\}$ is
\begin{align}
  \label{eq:stationary}
  \mu(K) = \mu(\Kmin)\prod_{j=\Kmin}^{K-1}\frac{\theta_j}{1-\theta_{j+1}},
\end{align}
and $\Pr(K_t=K)\to\mu(K)$ as $t\to\infty$ from any initial state.
\end{theorem}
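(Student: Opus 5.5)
The plan is to verify detailed balance for the explicit birth--death transition kernel, then invoke uniqueness from Theorem~\ref{thm:irreducible}, and finally add an aperiodicity argument for convergence. First I will write down the transition probabilities implied by \eqref{eq:AID}. For $K<N$ we have $P(K,K+1)=\theta_K$, and for $K>\Kmin$ we have $P(K,K-1)=1-\theta_K$. The only self-loops are at the boundaries: $P(\Kmin,\Kmin)=1-\theta_{\Kmin}$ and $P(N,N)=\theta_N$. All other entries vanish, so $\{K_t\}$ is a nearest-neighbor (birth--death) chain on $\calK$. Assumption~\ref{ass:monotone} guarantees $\theta_K\in(0,1)$ for every $K$, so every up- and down-rate appearing above is strictly positive.

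Next I will define $\mu$ by \eqref{eq:stationary} and check the detailed-balance equations
\[
\mu(K)\,\theta_K=\mu(K+1)\,(1-\theta_{K+1}),\qquad K\in\{\Kmin:N-1\}.
\]
These hold by construction, since the ratio $\mu(K+1)/\mu(K)$ in \eqref{eq:stationary} is exactly $\theta_K/(1-\theta_{K+1})$. Each factor is finite and positive because $1-\theta_{K+1}>0$, so $\mu(K)>0$ for all $K$. The free constant $\mu(\Kmin)$ is then fixed by the normalization $\sum_{K\in\calK}\mu(K)=1$, which is a finite sum of positive terms.

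To pass from detailed balance to stationarity, I will sum the detailed-balance identities over neighbors to obtain $\sum_{K'}\mu(K')P(K',K)=\mu(K)$ for every $K$. This is standard because the boundary self-loops contribute symmetrically: a loop at $K$ both sends mass from $K$ to $K$ and receives it, so it cancels. Uniqueness of $\mu$ then follows from Theorem~\ref{thm:irreducible}, since a finite irreducible chain has exactly one stationary distribution.

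For the limit statement $\Pr(K_t=K)\to\mu(K)$, irreducibility and positive recurrence alone are not enough; I also need aperiodicity. I expect this to be the main point of care, because a birth--death chain without self-loops is periodic with period $2$. Here, however, state $\Kmin$ has the self-loop probability $1-\theta_{\Kmin}>0$, which is exactly where the strict inequality $\theta_{\Kmin}<1$ in Assumption~\ref{ass:monotone} is used (equivalently, $\theta_N>0$ gives a loop at $N$). A self-loop makes that state aperiodic, and irreducibility transfers aperiodicity to the whole chain. The convergence theorem for finite irreducible aperiodic chains \cite{norris1998markov} then yields convergence to $\mu$ from any initial state. A degenerate case needs a separate remark: if $\Kmin=N$, the chain has a single state and the claim is trivial.
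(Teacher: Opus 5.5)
Your proposal is correct and follows essentially the same route as the paper: detailed balance $\mu(K)\theta_K=\mu(K+1)(1-\theta_{K+1})$ for the birth--death kernel, normalization of $\mu(\Kmin)$, uniqueness from irreducibility (Theorem~\ref{thm:irreducible}), and convergence via aperiodicity from the boundary self-loops. The only difference is direction: you verify detailed balance for the given formula and sum it to obtain stationarity, whereas the paper cites reversibility of finite birth--death chains and solves the recursion, so your version is slightly more self-contained.
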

\begin{proof}
A finite birth--death chain is reversible \cite{norris1998markov}, so $\mu$ satisfies $\mu(K)\theta_K=\mu(K{+}1)(1-\theta_{K+1})$; solving recursively from $K=\Kmin$ gives \eqref{eq:stationary}, with $\mu(\Kmin)= \bigl(1+\sum_{K=\Kmin+1}^{N}\prod_{j=\Kmin}^{K-1}\frac{\theta_j}{1-\theta_{j+1}}\bigr)^{-1}$ fixed by normalization. Convergence follows from irreducibility and aperiodicity (the latter from boundary self-loops).
By \eqref{eq:stationary}, $\mu(K{+}1)/\mu(K)=\theta_K/(1-\theta_{K+1})$
is at least one for $K<K^\dagger$ and below one for $K>K^\dagger$, so $\mu$ is unimodal with mode at $K^\dagger$ or $K^\dagger{+}1$ (at $K^\dagger$ iff $\theta_{K^\dagger}+\theta_{K^\dagger+1}\le1$).
\end{proof}

\begin{corollary}[Geometric concentration around $K^\dagger$]
\label{cor:concentration}
For every integer $m\ge1$ and \(\delta_R\triangleq\log\frac{1-\theta_{K^\dagger+1}}{\theta_{K^\dagger+1}}>0\) (strictly positive by Assumption~\ref{ass:monotone}, but $\delta_R\to0$ as $\theta_{K^\dagger+1}\to\tfrac12^-$),
\begin{equation}
  \label{eq:concentration}
  \Pr_\mu(K>K^\dagger+m)\le C_\gamma\exp(-m\delta_R),
\end{equation}
where $r\triangleq\theta_{K^\dagger+1}/(1-\theta_{K^\dagger+1})$, $\gamma\triangleq\tfrac12-\theta_{K^\dagger+1}$, 
and $C_\gamma\triangleq 1/(1-r)$ with a left-tail bound
\(
  \Pr_\mu(K < K^\dagger - m)
  \;\le\;
  \frac{e^{-(m+1)\delta_L}}{1-e^{-\delta_L}},
  \,
  \delta_L \triangleq \log\frac{\theta_{K^\dagger}}{1-\theta_{K^\dagger}} \ge 0,
\)
whenever $\theta_{K^\dagger}>\tfrac12$.
\end{corollary}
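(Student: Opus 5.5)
The proof rests on the ratio identity from the proof of Theorem~\ref{thm:stationary}:
\[
\frac{\mu(K+1)}{\mu(K)}=\frac{\theta_K}{1-\theta_{K+1}}.
\]
Using Assumption~\ref{ass:monotone}, I will bound this ratio uniformly by a constant below one on each side of $K^\dagger$. Each tail is then dominated by a geometric series. Both bounds use only $\mu(\cdot)\le 1$, so normalization is never needed. I assume $\theta_{\Kmin}\ge\tfrac12$, so that $K^\dagger$ is well defined.

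\textbf{Right tail.} If $K^\dagger=N$, the event $\{K>K^\dagger+m\}$ is empty and there is nothing to prove. Otherwise $K^\dagger+1\in\calK$, and by maximality of $K^\dagger$ we have $\theta_{K^\dagger+1}<\tfrac12$. Hence $r<1$ and $\delta_R=-\log r>0$.

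For any $K$ with $K^\dagger+1\le K\le N-1$, monotonicity gives two inequalities:
\[
\theta_K\le\theta_{K^\dagger+1},\qquad 1-\theta_{K+1}\ge 1-\theta_{K^\dagger+1}.
\]
Together they give $\mu(K+1)/\mu(K)\le r$. Iterating from $K^\dagger+1$ yields $\mu(K^\dagger+1+i)\le \mu(K^\dagger+1)\,r^i\le r^i$. Summing over $i\ge m$ gives
\[
\Pr_\mu(K>K^\dagger+m)\le \frac{r^m}{1-r}=C_\gamma e^{-m\delta_R},
\]
which is \eqref{eq:concentration}. The sum is a truncated geometric series because $\calK$ is finite.

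\textbf{Left tail.} Here I assume $\theta_{K^\dagger}>\tfrac12$, so $\delta_L>0$. I invert the ratio: for $\Kmin\le K\le K^\dagger-1$,
\[
\frac{\mu(K)}{\mu(K+1)}=\frac{1-\theta_{K+1}}{\theta_K}.
\]
Since $K+1\le K^\dagger$, we have $\theta_{K+1}\ge\theta_{K^\dagger}$, so the numerator is at most $1-\theta_{K^\dagger}$. Since $\theta_K\ge\theta_{K^\dagger}$, the denominator is at least $\theta_{K^\dagger}$. Hence the ratio is at most $e^{-\delta_L}$. Iterating downward from $K^\dagger$ gives $\mu(K^\dagger-j)\le \mu(K^\dagger)e^{-j\delta_L}\le e^{-j\delta_L}$. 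Summing over $j\ge m+1$ gives the stated bound $e^{-(m+1)\delta_L}/(1-e^{-\delta_L})$.

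\textbf{Main obstacle.} The only real care needed is checking that both monotonicity comparisons point the same way in each ratio. In the right-tail step, for example, the inequality must use $\theta_{K+1}\le\theta_{K^\dagger+1}$, which holds because $K+1>K^\dagger+1$. The remaining work is the edge cases: $K^\dagger=N$ on the right, and the left tail being empty when $K^\dagger-m\le\Kmin$. In those cases the bounds hold trivially.
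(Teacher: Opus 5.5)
Your proposal is correct and follows the paper's proof essentially step for step: you bound the ratio $\mu(K+1)/\mu(K)=\theta_K/(1-\theta_{K+1})$ by $r$ to the right of $K^\dagger$ and its inverse by $e^{-\delta_L}$ to the left, then sum each tail as a geometric series using $\mu(\cdot)\le 1$. Your explicit treatment of the edge cases is a small addition the paper leaves implicit: $K^\dagger=N$, where the right-tail event is empty, and the need for $\theta_{\Kmin}\ge\tfrac12$ so that $K^\dagger$ exists.
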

\begin{proof}
By Assumption~\ref{ass:monotone}, $\theta_{K^\dagger+1}<\tfrac12$ (since $K^\dagger$ is the last index with $\theta_K\ge\tfrac12$), so $r<1$ strictly. 
By \eqref{eq:stationary}, $\mu(K{+}1)/\mu(K)\le r$ for $K\ge K^\dagger{+}1$, so $\mu(K^\dagger{+}1{+}i)\le r^i\mu(K^\dagger{+}1)$ for $i\ge0$ and
\(
  \Pr_\mu(K>K^\dagger+m)\le\sum_{i\ge m}r^i=\frac{r^m}{1-r}=C_\gamma e^{-m\delta_R}.
\)
For $K<K^\dagger$, $\mu(K)/\mu(K{+}1)=(1-\theta_{K+1})/\theta_K\le e^{-\delta_L}$, so $\mu(K^\dagger-j)\le e^{-j\delta_L}\mu(K^\dagger)$; summing over $j\ge m+1$ gives the left tail.
The bound is valid for every $K^\dagger$ under Assumption~\ref{ass:monotone}, but its informativeness depends on the margin $\gamma\ge0$: as $\gamma\to0$, $r\to1$ and $C_\gamma\to\infty$, so the bound, becomes less informative -- a reflection of how close the channel's profile places $\theta_{K^\dagger+1}$ to one-half, not a defect of the bound.
\end{proof}

Corollary~\ref{cor:concentration} is a finite-state property of
$\mu$: the limit involved is only $t\to\infty$ (Theorem~\ref{thm:irreducible}),
not $N\to\infty$, so \eqref{eq:concentration} needs no large-$N$
requirement.
Since $\Delta_t$ depends on the full history since the last reset, the
AoI analysis uses the regenerative process $(\Delta_t,K_t)$, reset at
every success.

\subsection{Regenerative Structure and Hitting Times}

Let $\tau_0<\tau_1<\cdots$ be the success epochs, $L_n\triangleq\tau_{n+1}-\tau_n$ the inter-success time, and $R_n\triangleq L_n(L_n-1)/2$ the total age accrued in cycle $n$ (ages $0,1,\ldots,L_n-1$). With $k^-\triangleq\max(k-1,\Kmin)$, define $h_1(k)\triangleq\E[L_n\mid K_{\tau_n+1}=k]$ and $h_2(k)\triangleq\E[L_n^2\mid K_{\tau_n+1}=k]$.

\begin{lemma}[Hitting-time recursions]
\label{lem:hitting}
\begin{align}
  \label{eq:h1}
  h_1(k) &= 1 + (1-\theta_k)\,h_1(k^-),\\
  \label{eq:h2}
  h_2(k) &= 1 + 2(1-\theta_k)\,h_1(k^-) + (1-\theta_k)\,h_2(k^-),
\end{align}
with $h_1(\Kmin)=1/\theta_{\Kmin}$, $h_2(\Kmin)=(2-\theta_{\Kmin})/\theta_{\Kmin}^2$.
\end{lemma}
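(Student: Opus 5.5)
A first-step (one-slot) conditioning argument on the controller state at the start of the cycle should give both recursions. Fix $k\in\calK$ and condition on $K_{\tau_n+1}=k$, i.e.\ the prefix used in the first slot after the success at $\tau_n$. Given $K_t=k$, the outcome $S_t$ is Bernoulli$(\theta_k)$ independently of the past (the memoryless-channel property of \Cref{rem:memoryless_stateful}). By the strong Markov property of $\{K_t\}$ at the stopping time $\tau_n+1$, the future evolution depends only on $k$. This makes $h_1,h_2$ well defined as functions of $k$ alone.

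With probability $\theta_k$ the first slot succeeds and $L_n=1$. With probability $1-\theta_k$ it fails. Then, by \eqref{eq:AID}, the next state is $k^-=\max(k-1,\Kmin)$, and the remaining time to success $L'$ is distributed as $L_n$ conditioned on starting at $k^-$. Here we use that a failure only moves $K$ down, so before the first success the walk is deterministic: $k,k-1,\ldots$, then stuck at $\Kmin$. Hence
\[
L_n \overset{d}{=} 1+\mathbbm{1}\{S=0\}\,L',
\]
with $L'$ independent of $S$ given $k^-$. Taking expectations gives \eqref{eq:h1}. Squaring, $(1+L')^2=1+2L'+L'^2$, and taking expectations gives
\[
h_2(k)=\theta_k+(1-\theta_k)\bigl(1+2h_1(k^-)+h_2(k^-)\bigr),
\]
which simplifies to \eqref{eq:h2}.

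For the boundary values, at $k=\Kmin$ we have $k^-=\Kmin$. The recursion then becomes a fixed-point equation:
\[
h_1(\Kmin)=1+(1-\theta_{\Kmin})h_1(\Kmin),
\quad\text{so}\quad h_1(\Kmin)=1/\theta_{\Kmin},
\]
using $\theta_{\Kmin}<1$ and $\theta_{\Kmin}>0$ from Assumption~\ref{ass:monotone}. Similarly,
\[
\theta_{\Kmin}h_2(\Kmin)=1+2(1-\theta_{\Kmin})/\theta_{\Kmin},
\]
so $h_2(\Kmin)=(2-\theta_{\Kmin})/\theta_{\Kmin}^2$. These match the geometric moments: at $\Kmin$, $L_n$ is exactly Geometric$(\theta_{\Kmin})$.

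The main obstacle is justifying that these fixed-point equations are legitimate, i.e.\ that $h_1,h_2$ are finite. Finiteness holds because, given any start, $L_n$ is stochastically dominated by $(k-\Kmin)+\mathrm{Geom}(\theta_{\Kmin})$, which has finite second moment since $\theta_{\Kmin}>0$. With finiteness established, the boundary values are computed first, and the recursions then determine $h_1(k),h_2(k)$ upward for $k=\Kmin+1,\ldots,N$ by finite induction.
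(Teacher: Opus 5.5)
Your proposal is correct and is essentially the paper's argument. It conditions on the outcome of the cycle's first slot: immediate success with probability $\theta_k$, otherwise $L=1+M$ with $M$ distributed as $L$ from $k^-$. It then takes first and second moments, and the boundary values come from $L\sim\mathrm{Geom}(\theta_{\Kmin})$ at $\Kmin$. Your extra steps---the strong Markov justification that $h_1,h_2$ depend on $k$ alone, and the stochastic domination by $(k-\Kmin)+\mathrm{Geom}(\theta_{\Kmin})$ ensuring finiteness before you solve the fixed-point equations at $\Kmin$---supply rigor the paper leaves implicit.
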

\begin{proof}
Condition on the first-slot outcome at state $k$: an immediate success occurs with probability $\theta_k$ ($L=1$); a failure (probability $1-\theta_k$) moves the chain to $k^-$ and adds one slot to the remaining inter-success time, $L=1+M$ with $M$ distributed as $L$ from $k^-$. Taking expectations of $L$ and $L^2$ gives \eqref{eq:h1}--\eqref{eq:h2}. At $\Kmin$, failures self-loop, so $L\sim\mathrm{Geom}(\theta_{\Kmin})$, giving the boundary values.
The cycle-start distribution $\rho$ (the law of $K_{\tau_n+1}$) is the success-weighted distribution
\(\bar\theta\triangleq\sum_{k\in\calK}\mu(k)\theta_k\):
\(
  \rho(k') = \frac{1}{\bar\theta}\sum_{k:\min(k+1,N)=k'}\mu(k)\theta_k,
\)
with $\rho(\Kmin)=0$ and boundary term $\rho(N)=[\mu(N)\theta_N+\mu(N-1)\theta_{N-1}]/\bar\theta$.
\end{proof}

\subsection{Steady-State AoI and Goodput}
\begin{theorem}
\label{thm:aoi_performance}
With $\mu$ from Theorem~\ref{thm:stationary}, and $\rho$, and $h_1,h_2$ from Lemma~\ref{lem:hitting}, define
\(
  \E[L]\triangleq\sum_k\rho(k)h_1(k)
\),
\(
  \E[L^2]\triangleq\sum_k\rho(k)h_2(k).
\)
Then, almost surely from any initial state,
\begin{align}
  \label{eq:ss_goodput}
  \bar{G} &\triangleq \lim_{T\to\infty}\frac{1}{T}\sum_{t=1}^T \frac{K_t}{N}S_t = \sum_{K\in\calK}\mu(K)\frac{K}{N}\theta_K,\\
  \label{eq:ss_aoi}
  \bar{\Delta} &\triangleq \lim_{T\to\infty}\frac{1}{T}\sum_{t=1}^T \Delta_t = \frac{\E[L^2]-\E[L]}{2\E[L]}.
\end{align}
\end{theorem}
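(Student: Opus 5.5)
The plan is to treat the two limits separately: the goodput via the ergodic theorem for the finite chain, and the AoI via renewal--reward on the cycles between successes.

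\textbf{Goodput.} Consider the augmented chain $X_t\triangleq(K_t,S_t)$ on the finite set $\calK\times\{0,1\}$. It is Markov because $S_t$ depends only on $K_t$, and $K_{t+1}$ is a deterministic function of $(K_t,S_t)$ by \eqref{eq:AID}. By Theorem~\ref{thm:irreducible} and Assumption~\ref{ass:monotone} ($\theta_K\in(0,1)$), $X_t$ is irreducible on its finite state space. Its stationary law is $\mu(K)\theta_K^{s}(1-\theta_K)^{1-s}$. The ergodic theorem for finite irreducible chains \cite{norris1998markov} applied to the bounded function $f(K,s)=(K/N)s$ then gives \eqref{eq:ss_goodput} almost surely from any initial state, with limit $\sum_K\mu(K)(K/N)\theta_K$.

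\textbf{AoI.} I will use renewal--reward with cycles delimited by the success epochs $\tau_n$. The key structural fact is that $\{K_{\tau_n+1}\}_n$ is itself a Markov chain on $\calK$, by the strong Markov property of $X_t$ at the stopping times $\tau_n$. Each cycle $(L_n,R_n)$ is a function only of its start state $K_{\tau_n+1}$ and fresh randomness. The embedded chain is finite and irreducible on the states reachable after a success, so it has a unique stationary law. I will identify that law with $\rho$ from Lemma~\ref{lem:hitting} by a Palm/ratio argument. The long-run fraction of slots with $(K_t,S_t)=(k,1)$ equals $\mu(k)\theta_k$ by the first part, and the long-run success rate is $\bar\theta$. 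Hence the empirical frequency of cycle-start states converges to $\rho(k')=\bar\theta^{-1}\sum_{k:\min(k+1,N)=k'}\mu(k)\theta_k$.

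The cycle sums are handled as follows. The ergodic theorem for the Markov-modulated sequence $(K_{\tau_n+1},L_n)$ gives $\frac1n\sum_{i<n}L_i\to\sum_k\rho(k)h_1(k)=\E[L]$ and $\frac1n\sum_{i<n}R_i\to(\E[L^2]-\E[L])/2$. Integrability holds since $h_1,h_2$ are finite: $L$ is stochastically dominated by a geometric variable with parameter $\theta_N>0$. The ages inside cycle $i$ are $0,\ldots,L_i-1$, which sum to $R_i$. For $\tau_n\le T<\tau_{n+1}$, the bounds
\[
\textstyle\sum_{i<n}R_i\le\sum_{t\le T}\Delta_t\le\sum_{i\le n}R_i+O(\tau_0^2)
\]
sandwich the time average. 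Dividing by $T\sim\sum_{i<n}L_i$ yields $\bar\Delta=\frac{\E[L^2]-\E[L]}{2\E[L]}$. The initial segment before $\tau_0$ is a.s. finite and negligible, and $R_n/\tau_n\to0$ follows from the finite second moment.

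\textbf{Main obstacle.} The delicate step is justifying the renewal--reward ratio when cycles are not i.i.d. but Markov-modulated, and showing that the cycle-start law is exactly $\rho$. My first attempt will avoid the embedded chain entirely. I will express $\E[L]=1/\bar\theta$ directly, which is consistent because the success frequency is $\bar\theta$. I will then obtain the reward term by applying the ergodic theorem to $X_t$ on the extended state $(\Delta_t,K_t)$, which is a positive recurrent chain regenerating at each success. If that extended chain proves awkward, I will fall back on the embedded Markov chain argument above.
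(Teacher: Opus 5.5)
Your proposal is correct and takes the paper's route: the ergodic theorem gives the goodput, and renewal--reward over success-delimited cycles gives the AoI, with cycle moments from $h_1,h_2$ averaged over $\rho$. It is more careful than the paper's proof in two places: the augmented chain $(K_t,S_t)$ handles the fact that $(K_t/N)S_t$ is not a function of $K_t$ alone, and the embedded chain $\{K_{\tau_n+1}\}$, together with your Palm identification of its stationary law as $\rho$, justifies the ratio formula for cycles that are Markov-modulated rather than i.i.d. Keep that embedded-chain argument as the main line, because your $(\Delta_t,K_t)$ ``first attempt'' would still need the same cycle decomposition to reach the $h_2$ expression.
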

\begin{proof}
\textit{Goodput:} $(K_t/N)S_t$ has mean $(K/N)\theta_K$ given $K_t=K$, a bounded function of $K_t$ alone, so the ergodic theorem for finite irreducible chains \cite{norris1998markov} gives \eqref{eq:ss_goodput}.
\textit{AoI:} $(\Delta_t,K_t)$ regenerates at each success epoch since $\Delta_{\tau_n+1}=0$ and the future depends only on $K_{\tau_n+1}$. By the renewal-reward theorem \cite{norris1998markov}, $\bar\Delta=\E[R_n]/\E[L_n]=\E[L_n(L_n-1)/2]/\E[L_n]=(\E[L^2]-\E[L])/(2\E[L])$, with moments $L_n$ computed via \eqref{eq:h1}--\eqref{eq:h2} and averaged over the cycle-start law $\rho$.
\end{proof}

\begin{remark}
\label{rem:upper_bound}
$K^\star$ requires exact knowledge of $\{\theta_K\}$, recomputed
whenever the channel changes; AID uses neither. $G^\star$ is thus a
genie-aided upper bound on any policy, and $G^\star-\bar G$
prices the removal of channel knowledge, quantified under mismatch in
Table~\ref{tab:mismatch}. The present simulations validate this trade-off for a fixed, unknown channel; extending it to bootstrapping or time-varying channels is future work (\Cref{sec:conclusion}).
\end{remark}

\begin{figure}[!t]
\centering
\begin{tikzpicture}
\begin{groupplot}[
  group style={group size=2 by 1, horizontal sep=10mm},
  width=0.5\columnwidth, height=38mm,
  grid=major, grid style={gray!15},
  axis line style={gray!60},
  tick label style={font=\scriptsize},
  label style={font=\scriptsize},
  title style={font=\scriptsize},
]
\nextgroupplot[
  title={Stationary distribution},
  xlabel={Prefix length $K$}, ylabel={$\mu(K)$},
  xmin=-5, xmax=65, ymin=0, ymax=0.21,
  xtick={20,27,35,44,60},
  ybar, bar width=2pt,
  legend style={font=\tiny,at={(0.52,0.95)},anchor=north east,draw=gray!50,fill=white},
  legend cell align=left,
]
\addplot[fill=teal!70!black,draw=none,fill opacity=0.85] coordinates {
  (4,0.000)(8,0.000)(12,0.000)(16,0.001)(20,0.003)(24,0.009)
  (27,0.026)(28,0.032)(29,0.039)(30,0.048)(31,0.058)(32,0.068)
  (33,0.077)(34,0.084)(35,0.179)(36,0.160)(37,0.101)(38,0.060)
  (39,0.034)(40,0.018)(41,0.009)(42,0.005)(43,0.002)(44,0.001)};
\addlegendentry{Theory}
\addplot[fill=orange!85!red,draw=none,fill opacity=0.55] coordinates {
  (4,0.000)(8,0.000)(12,0.000)(16,0.001)(20,0.003)(24,0.009)
  (27,0.026)(28,0.032)(29,0.039)(30,0.048)(31,0.058)(32,0.068)
  (33,0.077)(34,0.084)(35,0.180)(36,0.161)(37,0.101)(38,0.061)
  (39,0.034)(40,0.019)(41,0.009)(42,0.005)(43,0.002)(44,0.001)};
\addlegendentry{Empirical}
\draw[gray!70,dashed] (axis cs:27,0) -- (axis cs:27,0.21) node[above,font=\tiny,black] {$K^\star$};
\draw[gray!70,dashed] (axis cs:35,0) -- (axis cs:35,0.21) node[above,font=\tiny,black] {$K^\dagger$};
\nextgroupplot[
  title={Operating point},
  ylabel={Value},
  symbolic x coords={AoI,Goodput},
  xtick=data,
  enlarge x limits=0.3,
  xticklabel style={font=\scriptsize},
  ymin=0, ymax=1.05,
  ybar, bar width=10pt,
  legend style={font=\tiny,at={(0.97,0.95)},anchor=north east,draw=gray!50,fill=white},
  legend cell align=left,
  nodes near coords,
  nodes near coords style={font=\tiny},
  every node near coord/.append style={anchor=south},
]
\addplot[fill=teal!70!black,draw=none,fill opacity=0.85] coordinates {(AoI,0.9174) (Goodput,0.2696)};
\addlegendentry{AID}
\addplot[fill=gray!55,draw=none,fill opacity=0.85] coordinates {(AoI,0.1689) (Goodput,0.3609)};
\addlegendentry{Fixed $K^\star$}
\end{groupplot}
\end{tikzpicture}
\caption{Left: AID stationary distribution, theory \eqref{eq:stationary} vs.\ empirical occupancy ($30$ trials $\times$ $10{,}000$ slots), mode at $K^\dagger=35$; dashed lines mark $K^\star=27$ and $K^\dagger=35$. Right: steady-state operating point, AID vs.\ genie-aided $K^\star=27$, $N=64$, BSC($p=0.11$). AID's AoI is theory-predicted via Theorem~\ref{thm:aoi_performance}.}
\label{fig:combined}
\end{figure}

\begin{table}[h]
\centering
\caption{Key operating-point comparison across channel types (AWGN: additive white Gaussian noise; SNR: signal-to-noise ratio): $N=64$, $C=0.5$, $30$ trials $\times$ $10{,}000$ slots.}
\label{tab:main_results}
\setlength{\tabcolsep}{3pt}
\begin{tabular}{lcccc}
\toprule
 & \multicolumn{2}{c}{BSC ($p=0.11$)} & \multicolumn{2}{c}{AWGN (SNR $=1.00$)} \\
Quantity & Theory & Empirical & Theory & Empirical \\
\midrule
AID AoI $\bar\Delta$ (slots) & $0.9174$ & $0.9194$ & $0.9129$ & $0.9149$ \\
AID goodput $\bar G$         & $0.2696$ & $0.2696$ & $0.2696$ & $0.2696$ \\
Genie goodput $G^\star$ & $0.3609$ & --- & $0.3676$ & --- \\
Goodput gap $G^\star-\bar G$ & $0.0913$ & --- & $0.0980$ & --- \\
\bottomrule
\end{tabular}
\end{table}

\begin{table}[h]
\centering
\caption{Goodput under channel mismatch: $K^\star$ designed for
$p=0.11$. AID adapts; fixed $K^\star=27$.
$N=64$, $30$ trials $\times$ $10{,}000$ slots.}
\label{tab:mismatch}
\setlength{\tabcolsep}{4pt}
\begin{tabular}{ccccc}
\toprule
True $p$ & AID $\bar{G}$ & Fixed $K^\star{=}27$ goodput
         & Gap (fixed$-$AID) & AID wins? \\
\midrule
$0.08$ & $0.3184$ & $0.3434$ & $+0.0250$ & No  \\
$0.11$ & $0.2696$ & $0.3609$ & $+0.0913$ & No  \\
$0.16$ & $0.2024$ & $0.1967$ & $-0.0057$ & \textbf{Yes}  \\
\bottomrule
\end{tabular}
\end{table}

\begin{table}[h]
\centering
\caption{AID sensitivity to BSC crossover probability $p$, $N=64$,
         $30$ trials $\times$ $10{,}000$ slots.}
\label{tab:p_sweep}
\setlength{\tabcolsep}{4pt}
\begin{tabular}{ccccc}
\toprule
$p$ & $\bar\Delta$ (theory) & $\bar\Delta$ (sim.)
    & $\bar G$ (theory) & $\bar G$ (sim.) \\
\midrule
0.02 & 0.9034 & 0.9039 & 0.4490 & 0.4490 \\
0.05 & 0.9161 & 0.9179 & 0.3763 & 0.3763 \\
0.08 & 0.9182 & 0.9198 & 0.3184 & 0.3185 \\
0.11 & 0.9174 & 0.9194 & 0.2696 & 0.2696 \\
0.15 & 0.9137 & 0.9137 & 0.2146 & 0.2147 \\
\bottomrule
\end{tabular}
\end{table}

\begin{table}[h]
\centering
\caption{Operating points vs.\ blocklength, BSC($p{=}0.11$), $\Kmin=4$.}
\label{tab:blocklength}
\setlength{\tabcolsep}{3pt}
\begin{tabular}{ccccccc}
\toprule
$N$ & $K^\star$ & $K^\dagger$ & $K^\star/N$ & $K^\dagger/N$ & $G^\star$ & $\bar G$ \\
\midrule
64    & 27    & 35    & 0.4219 & 0.5469 & 0.3609 & 0.2696 \\
256   & 109   & 132   & 0.4258 & 0.5156 & 0.3987 & 0.2569 \\
1024  & 459   & 517   & 0.4482 & 0.5049 & 0.4360 & 0.2522 \\
4096  & 1918  & 2054  & 0.4683 & 0.5015 & 0.4626 & 0.2507 \\
\bottomrule
\end{tabular}
\end{table}

AID's mean rate $\bar R=K^\dagger/N=0.547$ exceeds capacity $C=0.5$
by the dispersion correction $\log_2(N)/(2N)$, vanishing as $N\to\infty$.
\Cref{tab:p_sweep} confirms the agreement of \Cref{tab:main_results}
holds across five BSC qualities, with
$\bar\Delta$ stable near $0.917$ as $K^\dagger$ tracks $\bar\theta\approx0.5$.

\begin{corollary}
\label{cor:asymptotic-rate}
As $N\to\infty$ with $(C,V)$ fixed, $K^\dagger/N,K^\star/N\to C$: the
rate gap vanishes, but $G^\star-\bar G$ need not.
\Cref{tab:blocklength} confirms this: rate gap $0.125\to0.033$ over
$N{=}64$--$4096$; goodput gap stays in $[0.09,0.21]$, limit open.
\end{corollary}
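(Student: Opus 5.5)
The plan is to work with the FBL surrogate $\tilde\theta_K = 1-Q(x_K)$, where
\[
x_K = \frac{C - K/N + \frac{\log_2 N}{2N}}{\sqrt{V/N}},
\]
and to locate both $K^\dagger$ and $K^\star$ in terms of $x_K$.

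\textbf{The balance point.} Since $Q$ is strictly decreasing with $Q(0)=\tfrac12$, the condition $\tilde\theta_K \ge \tfrac12$ is equivalent to $x_K \ge 0$, i.e.\ $K \le N C + \tfrac12\log_2 N$. Hence
\[
K^\dagger = \min\bigl(N,\ \lfloor NC + \tfrac12 \log_2 N\rfloor\bigr).
\]
For large $N$ the minimum is the floor term, because $C<1$ for fixed $(C,V)$ with $V>0$. Therefore
\[
\frac{K^\dagger}{N} = C + O\!\left(\frac{\log N}{N}\right) \to C .
\]
This step is exact and routine. It also explains the observed $K^\dagger/N - C \approx \log_2(N)/(2N)$.

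\textbf{The goodput maximizer.} Write $\tilde G(K) = (K/N)\bigl(1-Q(x_K)\bigr)$ and fix $\eta>0$. I will show that for large $N$ every $K$ with $|K/N - C| > \eta$ is strictly beaten by a comparison point. The comparison point is $K_0 = \lfloor N(C - \epsilon_N)\rfloor$ with $\epsilon_N = N^{-1/2}\log N$. At $K_0$ we have $x_{K_0} \gtrsim \sqrt{N}\,\epsilon_N/\sqrt V \to \infty$, so $Q(x_{K_0}) \to 0$ and $\tilde G(K_0) \ge C - o(1)$.

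We then split into two cases.
\begin{itemize}
\item If $K/N > C + \eta$, then $x_K \le -\eta\sqrt{N/V} + o(1) \to -\infty$. This gives $\tilde G(K) \le Q(-x_K)\to 0$ uniformly over such $K$, which is below $C - o(1)$.
\item If $K/N < C - \eta$, then trivially $\tilde G(K) \le K/N < C - \eta$.
\end{itemize}
Both cases lose to $K_0$, so $K^\star/N \in [C-\eta, C+\eta]$ eventually. Since $\eta$ is arbitrary, $K^\star/N \to C$, and hence $|K^\star - K^\dagger|/N \to 0$.

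\textbf{The goodput gap.} It remains to show that $G^\star - \bar G$ need not vanish. The claim is only "need not", so one instance suffices. By the argument above, $G^\star \to C$, since $\tilde G(K_0) \le G^\star \le \max_K K/N \cdot(\ldots)$ and the same bounds pin it near $C$.

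For $\bar G$, use \eqref{eq:ss_goodput} together with the concentration of $\mu$ around $K^\dagger$ from Corollary~\ref{cor:concentration}. Near $K^\dagger$, $\tilde\theta_K$ is about $\tfrac12$, and $\bar\theta = \sum_K \mu(K)\theta_K$ is pinned near $\tfrac12$. The stationary flux balance $\sum_K \mu(K)\theta_K = \sum_K \mu(K)(1-\theta_K)$ holds up to boundary terms, which are negligible because $\mu$ puts vanishing mass near $\Kmin$ and $N$.

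The rate $K/N$ concentrates at $C$ on the window where $\mu$ lives. I will argue this via the drift: the window has width $O(\sqrt N)$ in $K$, because $\tilde\theta_K$ moves from $\tfrac12$ by $\Theta(1)$ when $K$ moves by $\Theta(\sqrt N)$. Then
\[
\bar G \approx \frac{K^\dagger}{N}\,\bar\theta \to \frac{C}{2}, \qquad G^\star - \bar G \to \frac{C}{2} > 0 .
\]
This is consistent with Table~\ref{tab:blocklength}, where $\bar G \to 0.25$ for $C = 0.5$.

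\textbf{Main obstacle.} The hard step is controlling $\bar\theta$ and the rate spread under $\mu$ uniformly in $N$. Corollary~\ref{cor:concentration} alone is weak here because $\delta_R \to 0$ as $N$ grows. I would first try to bound $\mu$ directly via \eqref{eq:stationary} with the Gaussian profile: for $|K - K^\dagger| \ge c\sqrt N$, the ratios $\theta_j/(1-\theta_{j+1})$ are bounded away from $1$, so the tails decay geometrically beyond an $O(\sqrt N)$ window. Since $\sqrt N/N \to 0$, this suffices.
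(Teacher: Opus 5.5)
Your proof is correct, and it establishes more than the paper does. The paper's argument is a two-line sketch. It notes $\tilde\theta_K\to\mathbf{1}\{K/N<C\}$ and asserts that this forces $K^\dagger/N,K^\star/N\to C$. For the gap it says only that the two convergence rates differ, so $\theta_{K^\star}$ need not track $\tfrac12$ and $G^\star-\bar G$ is ``unconstrained,'' with its limit left open.

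You make the first part rigorous and quantitative. The exact formula $K^\dagger=\lfloor NC+\tfrac12\log_2 N\rfloor$ reproduces every $K^\dagger$ entry of \Cref{tab:blocklength}. Your comparison point $K_0$, combined with a bound uniform over $K/N>C+\eta$, supplies control of the argmax, which a pointwise step-function limit does not give by itself.

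For the gap you replace ``need not vanish'' with an actual limit. First, $G^\star\to C$. Second, the flux identity $2\bar\theta-1=\mu(N)\theta_N-\mu(\Kmin)(1-\theta_{\Kmin})$ gives $\bar\theta\to\tfrac12$ once boundary mass vanishes. Third, outside a window of width $c\sqrt N$ around $K^\dagger$, the successive ratios $\mu(K{+}1)/\mu(K)$ to the right (and their reciprocals to the left) are at most some $r_c<1$ independent of $N$. So $\mu$ puts vanishing mass on $|K/N-C|>\eta$. Hence $\bar G\to C/2$ and $G^\star-\bar G\to C/2>0$.

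This resolves the paper's ``limit open'' under the FBL surrogate. It also shows that the interval $[0.09,0.21]$ describes only the tabulated values: the gap climbs toward $0.25$ when $C=0.5$, consistent with $\bar G=0.2507$ at $N=4096$.

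You are also right not to lean on Corollary~\ref{cor:concentration}. Its constants depend on the one-step margin at $K^\dagger$. In every row of the table $NC+\tfrac12\log_2N$ is an integer, so $\theta_{K^\dagger}=\tfrac12$ exactly and its left-tail bound does not apply.

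One small fix: state $0<C<1$ as a hypothesis, which both proofs need, rather than deducing $C<1$ from $V>0$. That deduction fails for high-SNR AWGN.
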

\begin{proof}
By \eqref{eq:fbl}, $\tilde\theta_K\to \mathbf{1}\{K/N<C\}$ as $N\to\infty$, forcing $K^\dagger/N,K^\star/N\to C$; the rates of convergence differ, so $\theta_{K^\star}$ need not track $\theta_{K^\dagger}=\tfrac12$, leaving $G^\star-\bar G$ unconstrained.
\end{proof}

\section{Simulation Results}
\label{sec:simulation}

We simulate AID for $N=64$ under a BSC with $p=0.11$; true success probabilities equal the FBL surrogate $\tilde\theta_K$ of \eqref{eq:fbl}, never revealed to the controller, which observes only $S_t$. Parameters: $\Kmin=4$, $T=10{,}000$ slots per trial, $30$ independent trials, and burn-in of $2{,}000$ slots. The system parameters are $K^\star=27$ and $K^\dagger=35$ (where $\tilde\theta_K=0.5$).
\Cref{fig:combined} presents the two central empirical results in a single figure. The left panel validates Theorem~\ref{thm:stationary}: the closed-form stationary distribution $\mu(K)$ matches the empirical occupancy over $30$ trials to within a maximum pointwise deviation of $0.0011$, with the mode at $K^\dagger=35$. The right panel summarizes the AID's steady-state AoI and goodput against the fixed-prefix baseline at $K^\star$.
\Cref{tab:main_results} summarizes the operating points for BSC and AWGN for SNR $=1.00$, SNR is chosen so AWGN capacity matches the BSC reference $C=0.5$: theory and
simulation agree within $0.22\%$ for AoI and $0.00\%$ for goodput.
\emph{Robustness to channel mismatch}
$K^\star$ is tuned for $p=0.11$; under mismatch (Table~\ref{tab:mismatch}),
its goodput degrades while AID self-corrects via feedback alone. At
$p=0.16$, the fixed baseline's mismatch penalty exceeds AID's structural
gap, making AID the preferred policy under channel uncertainty.

\section{Conclusion}
\label{sec:conclusion}
We give a finite-state characterization of AID: irreducibility and a closed-form stationary distribution, geometric
concentration around $K^\dagger$,
and exact steady-state AoI and goodput via renewal-reward, confirmed to within $0.3\%$ in simulation.
Against the genie-aided bound $K^\star$, AID incurs a
goodput gap by equilibrating at $K^\dagger$ rather than the goodput
optimum; this gap can invert under channel mismatch, favoring AID when the assumed channel is
wrong. It is the structural cost of estimation-free, one-bit
adaptation, available in closed form per $N$, though its
$N\to\infty$ limit remains open.
Limitations: We assume $\{\theta_K\}_{K\in\calK}$
fixed and the channel memoryless, instantaneous
feedback, and a fixed-duration slot regardless of $K$, so
\eqref{eq:aoi_dynamics} omits the transmission-time cost of a
low-rate update -- scope choices enabling a full closed-form
analysis.

\bibliographystyle{IEEEtran}

\end{document}